\documentclass[11pt]{article}
\usepackage[T1]{fontenc}
\usepackage{tgpagella}
\usepackage{mathpazo}
\usepackage{inconsolata}
\usepackage[margin=1in]{geometry}
\usepackage{amsmath,amssymb,amsthm}
\usepackage{booktabs}
\usepackage{array}
\usepackage{microtype}
\usepackage{xurl}
\usepackage[hidelinks]{hyperref}
\usepackage{fancyhdr}
\newtheorem{theorem}{Theorem}
\newtheorem{lemma}[theorem]{Lemma}
\newtheorem{proposition}[theorem]{Proposition}
\theoremstyle{definition}
\newtheorem{definition}[theorem]{Definition}

\newcommand{\R}{\mathbb{R}}
\newcommand{\N}{\mathbb{N}}
\newcommand{\W}{\mathcal{W}}
\newcommand{\coef}{\mathcal{C}}
\newcommand{\ord}{\mathcal{O}}
\newcommand{\bin}{\mathcal{B}}
\newcommand{\sort}{\mathcal{S}}

\newcommand{\file}[1]{\texttt{\path{#1}}}

\title{A Finite Certificate for the Positive $n=11$ Vasc Inequality}
\author{Tiancheng Xu\thanks{School of Electronic Engineering,
Beijing University of Posts and Telecommunications, Beijing 100876, China.}
\and Wensheng Yu\footnotemark[1]}
\date{}
\makeatletter
\renewcommand{\maketitle}{%
  \begingroup
  \renewcommand{\thefootnote}{\fnsymbol{footnote}}%
  \noindent{\LARGE\bfseries\@title\par}\vspace{0.22in}
  \noindent{\large\renewcommand{\and}{\unskip,\hspace{0.5em}}\@author\par}
  \@thanks
  \endgroup
  \setcounter{footnote}{0}
  \vspace{0.20in}
}
\makeatother

\begin{document}
\maketitle

\begin{abstract}
We establish the Vasc cyclic inequality for all strictly positive real
eleven-tuples by a finite exact certificate. Cyclic rotation places a
minimum at the first position, and rank words together with cumulative
gaps reduce the problem to $10!=3{,}628{,}800$ homogeneous integer
polynomials of degree eleven. Independent coefficient reconstruction
proves $3{,}358{,}617$ roots directly. The remaining $270{,}183$ roots are
covered by $267{,}952$ ordinary midpoint certificates, $2{,}195$ complete
binary subdivision certificates, and $36$ complete certificates containing
partial sorting nodes. Each nontrivial leaf is checked by subtracting
weighted AM--GM midpoint circuits from a positive multiple of the
transformed polynomial and verifying every residual coefficient.
Explicit word-set comparisons certify disjointness and exact coverage.
We prove the soundness of the leaves and both subdivision rules, then
recover the rational inequality by positive denominator clearing and
continuity. The resulting proof combines explicit mathematical
reductions with independently checked exact integer certificates.
\end{abstract}

\noindent\textbf{Keywords:} cyclic inequality; polynomial nonnegativity;
finite certificate; difference substitution; exact arithmetic;
computer-assisted proof.

\section{Setup}

\subsection{Convention}

The Vasc cyclic expression is a rational function whose sign depends on
the order of its variables. For $n\geq3$, define
\begin{equation}\label{eq:cn}
 C_n(x)=\sum_{i=0}^{n-1}
 \frac{x_i-x_{i+1}}{x_{i+1}+x_{i+2}},
 \qquad x\in\R_{>0}^{n},
\end{equation}
with indices interpreted modulo $n$. Each denominator is positive, and
$C_n$ is unchanged by a cyclic rotation or a common positive scaling.

\subsection{Related work}

Automated inequality proving has developed several ways to turn a
nonnegativity question into a finite algebraic certificate. Xia and
Yang~\cite{xiayang} describe algorithms for polynomial inequality
proving and discovery, including successive difference substitution
(SDS). Yao~\cite{yao} studies structured difference substitutions
based on column stochastic matrices. These methods motivate the
cumulative-gap coordinates used below: order constraints become
nonnegative parameters, and polynomial coefficients become directly
checkable proof data.

Positive multipliers and domain subdivision address different
obstructions to coefficientwise nonnegativity. Powers and
Reznick~\cite{powers} give a quantitative form of P\'{o}lya's theorem
for homogeneous polynomials strictly positive on a simplex.
Xu and Yao~\cite{xuyao} study the relationship between P\'{o}lya's
method and barycentric successive difference substitution.
The zeros of the target polynomial lead us to combine positive
multipliers with midpoint decompositions and finite subdivisions
of the gap orthant.

Sparse AM--GM decompositions provide another source of nonnegativity
certificates. Iliman and de Wolff~\cite{iliman} characterize
nonnegative polynomials supported on circuits, while
Chandrasekaran and Shah~\cite{chandrasekaran} develop AM--GM-based
certificates within relative-entropy relaxations for signomial
optimization. The leaves used here employ a restricted three-term
midpoint construction on the nonnegative orthant. Their soundness
follows from the elementary square decomposition proved in
Lemma~\ref{lem:circuit}.

For the Vasc family, Guo, Qiu, Cao, and Feng~\cite{guo} give a finite
certificate for the positive $n=9$ case, covering $8!=40{,}320$
sorted roots. We adopt the same separation between algebraic
reductions and finite certificate verification. The eleven-variable
construction fixes a cyclic minimum, uses rank words and cumulative
gaps, and covers $10!=3{,}628{,}800$ roots. Weighted midpoint
circuits and subdivisions of difficult gap cones supplement the
direct coefficient checks.

The conference abstract of Zeng and coauthors~\cite{zeng} reports
an $n=11$ result on a small cone in the positive orthant. The construction below covers the full strictly positive domain
through a partition of all normalized rank words.

\subsection{Statement}

\begin{theorem}\label{thm:main}
For every $x_0,\ldots,x_{10}>0$, with subscripts taken modulo eleven,
\begin{equation}\label{eq:main}
 \sum_{i=0}^{10}\frac{x_i-x_{i+1}}{x_{i+1}+x_{i+2}}\geq0.
\end{equation}
Every constant positive tuple attains equality.
\end{theorem}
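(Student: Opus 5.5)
The plan follows the architecture announced in the abstract: a sequence of exact algebraic reductions down to finitely many integer polynomials, followed by certificate verification.

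\textbf{Reduction to finitely many polynomials.} By cyclic invariance of $C_{11}$ I rotate so that a minimum sits at position $0$. Scaling lets me assume $x_0=1$, or I can simply keep homogeneity. I then multiply $C_{11}$ by the positive product $\prod_{i}(x_{i+1}+x_{i+2})$. This turns \eqref{eq:main} into $P(x)\ge0$ for a homogeneous integer polynomial $P$ of degree eleven. The sign is unchanged because the product is strictly positive.

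\textbf{Splitting by order.} I record the relative order of $x_1,\dots,x_{10}$ as a rank word, a permutation $\sigma$ of $\{1,\dots,10\}$, which gives $10!$ cases. In each case I substitute cumulative gaps
\[
x_0=t_0,\qquad x_{\sigma(k)}=t_0+t_1+\dots+t_k ,
\]
with all $t_j\ge0$. This produces $10!$ homogeneous polynomials $P_\sigma(t)$ on the closed orthant $\R_{\ge0}^{11}$. Ties between entries are covered by the closed chambers, and the union of the chambers is all of $\R_{>0}^{11}$ up to rotation. I verify exact coverage by checking that the word set is precisely the set of all permutations, with no duplicates.

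\textbf{Certifying each $P_\sigma\ge0$.} I proceed in three tiers.
\begin{enumerate}
\item First, I check that every coefficient of $P_\sigma$ is nonnegative. I expect this to dispose of most roots.
\item Second, if some coefficients are negative, I multiply by a positive multiplier such as $(t_0+\dots+t_{10})^m$, in the spirit of P\'olya. I then cancel each remaining negative monomial $t^{\gamma}$ against two positive monomials $t^{\alpha},t^{\beta}$ with $\gamma=(\alpha+\beta)/2$, using the weighted AM--GM midpoint inequality $a\,t^\alpha+b\,t^\beta\ge 2\sqrt{ab}\,t^\gamma$ (Lemma~\ref{lem:circuit}). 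After the subtraction, every residual coefficient must be nonnegative, and this is checked exactly in integer arithmetic.
\item Third, where this fails, I subdivide the gap orthant by binary splits. A split on $t_i$ versus $t_j$ substitutes $t_i\mapsto t_i+t_j$ on one half and $t_j\mapsto t_i+t_j$ on the other. Where useful, I also use partial re-sorting nodes, which refine a chamber by ordering some of the gaps. I recurse until each leaf passes the second tier. Soundness of the tree follows because the children's images cover the parent cone.
\end{enumerate}

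\textbf{Conclusion.} The chambers cover the positive orthant, so $P\ge0$ holds on every chamber, and hence $C_{11}\ge0$ on all strictly positive tuples. For the equality clause, I substitute a constant tuple directly: every numerator $x_i-x_{i+1}$ vanishes. Continuity is used only to pass from the closed chambers back to the open positive domain.

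\textbf{Main obstacle.} The hard step is the third tier. $P_\sigma$ has zeros on the boundary of the orthant, most importantly along the diagonal ray $t_1=\dots=t_{10}=0$, where equality holds. Near such zeros, P\'olya multipliers alone cannot succeed. The circuits therefore have to be chosen so that the zero is respected: the negative terms must be dominated by monomials that vanish to the same order. My first attempt would be a linear-programming search over midpoint triples for each root, with subdivision triggered only when the LP is infeasible. The engineering burden lies in keeping the certificates exact, independently checkable, and of manageable total size.
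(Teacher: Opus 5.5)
Your proposal is essentially the paper's proof. The paper also rotates a minimum to position zero, clears the positive denominator, and splits into $10!$ rank words with cumulative gaps. It then certifies each gap polynomial in one of three ways: by nonnegative coefficients, by $2H^kP_w$ minus weighted midpoint AM--GM circuits, or by binary and partial-sorting subdivision trees whose children cover the parent orthant. As in your plan, the whole argument rests on actually exhibiting the finite certificate for every word (Proposition~\ref{prop:family}).

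The one difference is minor. The paper applies the rank-word argument only to pairwise distinct tuples and recovers ties by density and continuity. Your closed chambers with $t_0>0$ cover ties directly, which is valid because every certificate gives $P_w\ge0$ on the entire closed orthant. So your closing remark that continuity is still needed there is unnecessary.
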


\subsection{Proof strategy}

We first reduce~\eqref{eq:main} to one homogeneous polynomial for
each normalized rank word. Nonnegative coefficients settle most
roots. For the remaining roots, weighted midpoint decompositions
prove nonnegativity on gap cones obtained by binary comparison or
partial sorting. Section~\ref{sec:finite} assembles these cones into
a complete finite certificate, and Section~\ref{sec:assembly} recovers
the rational inequality.

\section{Reductions}\label{sec:reduction}

\subsection{Clearing the positive denominators}

Denominator clearing preserves the sign on the entire stated domain.
Set
\begin{align}
 D_{11}(x)&=\prod_{i=0}^{10}(x_{i+1}+x_{i+2}),\label{eq:d}\\
 P_{11}(x)&=\sum_{i=0}^{10}(x_i-x_{i+1})
       \prod_{\substack{0\leq j\leq10\\j\neq i}}
       (x_{j+1}+x_{j+2}).\label{eq:p}
\end{align}
The polynomial $P_{11}$ has integer coefficients and is homogeneous
of degree eleven.

\begin{lemma}\label{lem:clear}
For $x\in\R_{>0}^{11}$, one has $D_{11}(x)>0$ and
$P_{11}(x)=D_{11}(x)C_{11}(x)$. Consequently,
$C_{11}(x)\geq0$ if and only if $P_{11}(x)\geq0$.
\end{lemma}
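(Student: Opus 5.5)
The argument is a direct computation with three parts: positivity of $D_{11}$, the identity $P_{11}=D_{11}C_{11}$, and the sign equivalence that follows from them.

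\emph{Positivity.} For $x\in\R_{>0}^{11}$, each factor $x_{i+1}+x_{i+2}$ in~\eqref{eq:d} is a sum of two positive reals, so it is positive. A product of eleven positive reals is positive, hence $D_{11}(x)>0$. In particular, every denominator in~\eqref{eq:cn} is nonzero, so $C_{11}(x)$ is well defined.

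\emph{The identity.} Write $d_i=x_{i+1}+x_{i+2}$ for $i=0,\ldots,10$, with indices taken modulo eleven. Then $D_{11}=\prod_{j} d_j$, and for each $i$,
\[
 D_{11}\cdot\frac{x_i-x_{i+1}}{d_i}=(x_i-x_{i+1})\prod_{j\neq i}d_j ,
\]
because cancelling the single factor $d_i\neq0$ is legitimate. Summing over $i$ and using linearity of multiplication by $D_{11}$ gives exactly the right-hand side of~\eqref{eq:p}. Hence $D_{11}(x)C_{11}(x)=P_{11}(x)$.

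Integrality and homogeneity of degree eleven are immediate, since each summand is a linear form times ten linear forms with integer coefficients. Both facts are asserted before the lemma and are not needed for the proof itself.

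\emph{Sign equivalence.} Since $D_{11}(x)>0$, multiplying by $D_{11}(x)$ or by $1/D_{11}(x)$ preserves the order relation with $0$. Therefore $C_{11}(x)\geq0$ holds if and only if $P_{11}(x)=D_{11}(x)C_{11}(x)\geq0$.

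The only point needing care is that the modular indexing in $d_i$ matches the factors indexed by $j$ in~\eqref{eq:p}. This holds because both use the same cyclic convention $x_{j+1}+x_{j+2}$ with indices modulo eleven.
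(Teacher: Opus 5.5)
Your proof is correct and follows the paper's argument: positivity of each factor $x_{i+1}+x_{i+2}$, cancelling the $i$th denominator against $D_{11}$ and summing, then dividing by the positive product to get the sign equivalence. Your added remarks on well-definedness of $C_{11}$ and on the matching cyclic indexing are harmless refinements of the same argument.
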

\begin{proof}
Every factor in~\eqref{eq:d} is positive. Multiplying the $i$th
summand of $C_{11}$ by $D_{11}$ cancels its denominator and leaves
exactly the $i$th summand of~\eqref{eq:p}. Summing proves the identity.
Division by the positive product gives the equivalence.
\end{proof}

\subsection{Normalized rank words}

For a tuple with pairwise distinct coordinates, rotate the cycle so
that its unique minimum occupies position zero. Write the values in
increasing order as
\[
 X_0<X_1<\cdots<X_{10}.
\]
The rotated tuple has the form $x_i=X_{w_i}$ for a unique word
\begin{equation}\label{eq:words}
 \W=\{(0,w_1,\ldots,w_{10}):
       (w_1,\ldots,w_{10})\text{ permutes }1,\ldots,10\}.
\end{equation}
Thus $|\W|=10!=3{,}628{,}800$. The word records the ranks at their original cyclic positions.

Introduce the gaps
\begin{equation}\label{eq:gaps}
 g_{10}=X_0,\qquad
 g_{10-r}=X_r-X_{r-1}\quad(1\leq r\leq10).
\end{equation}
Then
\begin{equation}\label{eq:cumulative}
 X_r(g)=\sum_{j=10-r}^{10}g_j,\qquad 0\leq r\leq10.
\end{equation}
All gaps are strictly positive for a distinct positive tuple.
Nonnegative gaps also allow ties and zero values, which will be
useful for the polynomial certificates.

\begin{lemma}\label{lem:gap}
The map in~\eqref{eq:cumulative} is a bijection between
$g_0,\ldots,g_9\geq0$, $g_{10}>0$ and weakly increasing positive
lists $X_0\leq\cdots\leq X_{10}$. Its inverse is~\eqref{eq:gaps}.
\end{lemma}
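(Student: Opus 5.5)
The plan is to verify directly that the two maps are well defined and mutually inverse. No earlier result is needed beyond the definitions \eqref{eq:gaps} and \eqref{eq:cumulative}.

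\emph{Forward map.} Take $g_0,\ldots,g_9\geq0$ and $g_{10}>0$, and define $X_r(g)$ by \eqref{eq:cumulative}. Then $X_0=g_{10}>0$. For $1\leq r\leq10$, the sums for $X_r$ and $X_{r-1}$ differ by exactly one term:
\[
 X_r(g)-X_{r-1}(g)=g_{10-r}\geq0 .
\]
Hence $X_0\leq X_1\leq\cdots\leq X_{10}$ is a weakly increasing list, and every entry is at least $X_0>0$, so the list is positive.

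\emph{Inverse map.} Take a weakly increasing positive list, and define gaps by \eqref{eq:gaps}. Then $g_{10}=X_0>0$, and for $1\leq r\leq10$ monotonicity gives $g_{10-r}=X_r-X_{r-1}\geq0$. So the gaps lie in the stated domain.

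\emph{The two compositions are identities.}
\begin{itemize}
\item Gaps to list to gaps: the difference formula above shows that the recovered gaps are $X_0(g)=g_{10}$ and $X_r(g)-X_{r-1}(g)=g_{10-r}$. These are the original gaps.
\item List to gaps to list: substituting \eqref{eq:gaps} into \eqref{eq:cumulative} gives a telescoping sum,
\[
 \sum_{j=10-r}^{10}g_j
 = X_0+\sum_{s=1}^{r}(X_s-X_{s-1})
 = X_r .
\]
To match the indices, note that as $j$ runs over $10-r,\ldots,9$, the index $s=10-j$ runs over $r,\ldots,1$.
\end{itemize}
Both compositions are identities, so the map is a bijection with inverse \eqref{eq:gaps}.

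The only point needing care is this reversed indexing $g_{10-r}$ in the telescoping sum. I expect no genuine obstacle.
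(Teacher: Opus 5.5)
Your proof is correct and matches the paper's argument: you show $g_{10}>0$ and that each step adds a nonnegative gap in one direction, that adjacent differences are nonnegative in the other, and that telescoping makes the two maps mutually inverse. Your explicit index check for the reversed labeling $g_{10-r}$ simply spells out a step the paper leaves implicit.
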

\begin{proof}
The first cumulative sum is $g_{10}>0$, and each next value adds
a nonnegative gap. Conversely, adjacent differences in a weakly
increasing list are nonnegative. The defining sums telescope, so
the two constructions are inverse.
\end{proof}

For $w\in\W$, define the linear forms and their cleared polynomial by
\begin{align}
 A_i(g)&=X_{w_i}(g)-X_{w_{i+1}}(g),\\
 L_i(g)&=X_{w_{i+1}}(g)+X_{w_{i+2}}(g),\\
 P_w(g)&=\sum_{i=0}^{10}A_i(g)
                  \prod_{\substack{0\leq j\leq10\\j\neq i}}L_j(g).
                  \label{eq:pw}
\end{align}
Here $P_w=P_{11}(X_{w_0},\ldots,X_{w_{10}})$. On the strictly
positive gap domain,
\begin{equation}\label{eq:quotient}
 C_{11}(X_{w_0},\ldots,X_{w_{10}})
   =\frac{P_w(g)}{\prod_iL_i(g)},\qquad \prod_iL_i(g)>0.
\end{equation}
There are
\begin{equation}\label{eq:monomials}
 \#\{\alpha\in\N^{11}:|\alpha|=11\}
   =\binom{21}{10}=352{,}716
\end{equation}
possible degree-eleven monomials per root; some coefficients are zero.

\section{Polynomial nonnegativity certificates}\label{sec:leaves}

Each leaf certifies the defining polynomial on a transformed orthant.
Let $T$ be the nonnegative integer matrix reconstructed along the
path from a root, and let $g=Tz$, $z\in\R_{\geq0}^{11}$.
For $\alpha\in\N^{11}$, write
\[
 z^\alpha=\prod_{j=0}^{10}z_j^{\alpha_j},\qquad
 |\alpha|=\sum_j\alpha_j,\qquad H(z)=\sum_{j=0}^{10}z_j.
\]
Monomials are evaluated with the standard convention $0^0=1$.

\subsection{Leaf certificates}

\subsubsection{Weighted midpoint circuits}

A weighted midpoint circuit is a polynomial
\begin{equation}\label{eq:circuit}
 Q(z)=u z^\alpha+v z^\gamma-d z^\beta
\end{equation}
with positive integer weights $u,v,d$ and exponent vectors satisfying
\begin{equation}\label{eq:conditions}
 \alpha+\gamma=2\beta,\qquad d^2\leq4uv.
\end{equation}
In a degree-$(11+k)$ leaf, all three exponents also have degree $11+k$.

\begin{lemma}[Circuit soundness]\label{lem:circuit}
Every circuit satisfying~\eqref{eq:conditions} is nonnegative on
$\R_{\geq0}^{11}$.
\end{lemma}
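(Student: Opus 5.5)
The plan is to write $Q$ as a nonnegative combination of a perfect square and a nonnegative monomial on the orthant, using the condition $\alpha+\gamma=2\beta$ to put the three monomials in the shape $a^2$, $b^2$, $ab$.

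Fix $z\in\R_{\geq0}^{11}$ and set $a=z^{\alpha/2}=\prod_j z_j^{\alpha_j/2}$ and $b=z^{\gamma/2}$. These are nonnegative real numbers, since each factor $z_j^{\alpha_j/2}$ is a nonnegative real power of a nonnegative number, read as $1$ when $\alpha_j=0$ by the convention $0^0=1$. Then $a^2=z^\alpha$ and $b^2=z^\gamma$. The condition $\alpha+\gamma=2\beta$ gives $\alpha_j+\gamma_j=2\beta_j$ for each $j$. Hence $ab=\prod_j z_j^{(\alpha_j+\gamma_j)/2}=\prod_j z_j^{\beta_j}=z^\beta$.

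The one point needing care is that this last identity still holds when some $z_j=0$. If $\beta_j=0$, then $\alpha_j=\gamma_j=0$, and both sides contribute the factor $1$. If $\beta_j>0$, then at least one of $\alpha_j,\gamma_j$ is positive, so both sides vanish. So $Q(z)=ua^2+vb^2-dab$ holds identically on the closed orthant.

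It remains to show $ua^2-dab+vb^2\ge0$ for $a,b\ge0$. Since $u>0$, complete the square:
\[
ua^2-dab+vb^2=u\Bigl(a-\frac{d}{2u}b\Bigr)^2+\frac{4uv-d^2}{4u}\,b^2 .
\]
The first term is a nonnegative multiple of a square. The second is nonnegative because $d^2\le 4uv$. This proves the lemma.

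Two remarks. Equivalently, one can use weighted AM--GM: $ua^2+vb^2\ge 2\sqrt{uv}\,ab\ge d\,ab$. Also, the degree condition and the integrality of the weights play no role; only $u>0$ and $d^2\le4uv$ are used. The main obstacle is only the bookkeeping with zero coordinates and half-integer exponents, which the case check above handles.
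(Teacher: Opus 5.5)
Your proof is correct and follows essentially the same route as the paper: you use $\alpha+\gamma=2\beta$ to write the three monomials as $a^2$, $b^2$, $ab$ with $a,b\ge0$, then deduce $ua^2-dab+vb^2\ge0$ from $d^2\le4uv$. The differences are cosmetic. The paper uses the decomposition $(\sqrt{u}\,a-\sqrt{v}\,b)^2+(2\sqrt{uv}-d)\,ab$ rather than completing the square in $a$. It also takes $a=\sqrt{z^\alpha}$ and $b=\sqrt{z^\gamma}$ directly, so that $ab=\sqrt{(z^\beta)^2}=z^\beta$ holds without your coordinatewise zero check.
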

\begin{proof}
The exponent identity implies
$z^\alpha z^\gamma=(z^\beta)^2$.
Because monomials are nonnegative, this gives
$\sqrt{z^\alpha z^\gamma}=z^\beta$, including when some coordinates
vanish. Hence
\begin{align*}
 u z^\alpha+v z^\gamma-d z^\beta
 &=\bigl(\sqrt{u z^\alpha}-\sqrt{v z^\gamma}\bigr)^2
       +(2\sqrt{uv}-d)z^\beta.
\end{align*}
Both terms are nonnegative, since $d>0$ and $d^2\leq4uv$ imply
$d\leq2\sqrt{uv}$.
\end{proof}

The ordinary midpoint circuit is the special case
$u=v=a$, $d=2a$, with $a\in\mathbb{Z}_{>0}$.
It equals $a(\sqrt{z^\alpha}-\sqrt{z^\gamma})^2$.
Weighted circuits permit unequal endpoint contributions while keeping
the acceptance test entirely in integer arithmetic.

\subsubsection{The exact leaf test}

\begin{definition}\label{def:leaf}
A leaf at $(w,T)$ is accepted by either of two tests:
\begin{enumerate}
 \item Every coefficient of $P_w(Tz)$ is nonnegative.
 \item For $k\in\{0,1,2,3\}$ and a finite list of circuits $Q_s$,
 each circuit satisfies~\eqref{eq:conditions}, each exponent has
 degree $11+k$, and every coefficient of
 \begin{equation}\label{eq:residual}
  R(z)=2H(z)^kP_w(Tz)-\sum_sQ_s(z)
 \end{equation}
 is nonnegative.
\end{enumerate}
In both cases the coefficients are obtained by expanding~\eqref{eq:pw}.
\end{definition}

\begin{lemma}[Leaf soundness]\label{lem:leaf}
An accepted leaf proves $P_w(Tz)\geq0$ for all $z\geq0$.
\end{lemma}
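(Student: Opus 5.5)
The argument splits along the two acceptance tests of Definition~\ref{def:leaf}. In both cases it rests on one elementary fact: a polynomial with nonnegative real coefficients is nonnegative on $\R_{\geq0}^{11}$. This holds because each monomial $z^\alpha$, with the convention $0^0=1$, is a product of nonnegative factors. For the first test, $P_w(Tz)$ has nonnegative coefficients by assumption, so $P_w(Tz)\geq0$ for all $z\geq0$ immediately.

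For the second test, fix $k\in\{0,1,2,3\}$ and the accepted circuits $Q_s$. I would rewrite \eqref{eq:residual} as
\[
 2H(z)^kP_w(Tz)=R(z)+\sum_sQ_s(z).
\]
Here $R$ has nonnegative coefficients and is therefore nonnegative on the orthant. Each $Q_s$ satisfies \eqref{eq:conditions}, so Lemma~\ref{lem:circuit} gives $Q_s(z)\geq0$. Hence $2H(z)^kP_w(Tz)\geq0$ for every $z\geq0$.

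It remains to divide by the multiplier. When $z\neq0$, we have $H(z)=\sum_jz_j>0$, so $2H(z)^k>0$ and $P_w(Tz)\geq0$. When $z=0$, the polynomial $P_w(Tz)$ is homogeneous of degree eleven in $z$, because $g=Tz$ is linear and $P_w$ is homogeneous of degree eleven. Therefore $P_w(T\cdot0)=0$. Alternatively, one can argue by continuity, since $P_w(Tz)$ is a polynomial and $\R_{\geq0}^{11}\setminus\{0\}$ is dense in the orthant.

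None of these steps is hard. The only point needing care is the degree bookkeeping. The exponents of all $Q_s$ have degree $11+k$, matching $2H^kP_w(Tz)$, so the identity above is an honest polynomial identity. It is also worth noting that the coefficients referred to in Definition~\ref{def:leaf} are the true coefficients obtained by expanding \eqref{eq:pw} after the substitution $g=Tz$. The argument needs no nonnegativity of $T$ beyond what is used elsewhere.
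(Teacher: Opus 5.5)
Your proposal is correct and follows essentially the same route as the paper. The first test uses coefficientwise nonnegativity; the second uses the identity $2H(z)^kP_w(Tz)=R(z)+\sum_sQ_s(z)\geq0$ via Lemma~\ref{lem:circuit}, division by $H(z)^k>0$ for $z\neq0$, and the vanishing of the degree-eleven homogeneous polynomial at $z=0$.
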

\begin{proof}
A polynomial with nonnegative coefficients is nonnegative on the
nonnegative orthant, proving the first case. In the second case,
Lemma~\ref{lem:circuit} and~\eqref{eq:residual} give
$2H(z)^kP_w(Tz)=R(z)+\sum_sQ_s(z)\geq0$.
If $z\neq0$, then $H(z)>0$ and division is legitimate, also for
$k=0$. At $z=0$, the homogeneous polynomial $P_w(Tz)$ of degree
eleven vanishes. This proves the claim on the entire closed orthant.
\end{proof}

\subsection{Finite covers of gap cones}\label{sec:covers}

A difficult root may require different leaf witnesses on different
parts of its gap orthant. A cover starts with $T=I_{11}$ and contains
only the two internal-node rules specified below. Child matrices are
computed from the parent matrix, and every child must itself have a
complete accepted subtree. The images of children can overlap on
their boundaries.

\subsubsection{Binary comparison nodes}

Choose distinct indices $i,j$ and split the parent parameters into
$z_i\geq z_j$ and $z_j\geq z_i$. For the first child substitute
\[
 z_i=y_i+y_j,\qquad z_j=y_j,\qquad z_\ell=y_\ell\ (\ell\neq i,j).
\]
If $T_\ell$ denotes column $\ell$ of $T$, the child matrix $U$ has
$U_j=T_j+T_i$ and $U_\ell=T_\ell$ for $\ell\neq j$.
For the second child, $U_i=T_i+T_j$ and the other columns are
unchanged.

\begin{lemma}\label{lem:binary}
The two binary children cover the parent parameter orthant.
\end{lemma}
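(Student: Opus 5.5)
The plan is to show two things: every parent point $z\in\R_{\geq0}^{11}$ is the image of a nonnegative child point, and the child matrix $U$ composes correctly with the substitution, that is, $Tz=Uy$. With both in hand, the gap vectors reached by the two children together are exactly those reached by the parent.

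Fix $z\geq0$ and split by the comparison. If $z_i\geq z_j$, define $y$ by
\[
 y_i=z_i-z_j,\qquad y_j=z_j,\qquad y_\ell=z_\ell\ (\ell\neq i,j).
\]
All entries of $y$ are nonnegative because $z_i-z_j\geq0$, and the first child's substitution returns $z$. If instead $z_j\geq z_i$, take $y_j=z_j-z_i$, $y_i=z_i$, and keep the remaining coordinates. This $y$ is nonnegative and the second child's substitution returns $z$. Every $z$ falls into at least one case, and points with $z_i=z_j$ fall into both; this is the boundary overlap the paper permits.

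Next I will verify the matrix update column by column. In the first child,
\[
 Tz=\sum_\ell T_\ell z_\ell
 =T_i(y_i+y_j)+T_jy_j+\sum_{\ell\neq i,j}T_\ell y_\ell
 =T_iy_i+(T_i+T_j)y_j+\sum_{\ell\neq i,j}T_\ell y_\ell .
\]
This equals $Uy$ with $U_j=T_j+T_i$ and all other columns unchanged. The second child is symmetric, with $U_i=T_i+T_j$.

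It then follows that
\[
 T(\R_{\geq0}^{11})=U^{(1)}(\R_{\geq0}^{11})\cup U^{(2)}(\R_{\geq0}^{11}).
\]
The inclusion $\supseteq$ is immediate, since each child substitution maps $\R_{\geq0}^{11}$ into $\R_{\geq0}^{11}$ (sums of nonnegative numbers are nonnegative). The main step is the bookkeeping identity $Tz=Uy$, which makes the matrix-level rule agree with the coordinate substitution. Everything else is elementary.
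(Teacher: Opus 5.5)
Your proof is correct and takes the same approach as the paper. In each comparison case you give the explicit nonnegative preimage ($y_i=z_i-z_j$, $y_j=z_j$, or the reverse) and note that every pair of coordinates satisfies at least one comparison. Your column-by-column check that $Tz=Uy$ verifies the child-matrix rule the paper states before the lemma; it is a sound extra detail, not a different route.
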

\begin{proof}
For $z_i\geq z_j$, take $y_i=z_i-z_j$, $y_j=z_j$, and retain the
other coordinates. These parameters are nonnegative and reproduce
$z$. If $z_j\geq z_i$, use the reversed construction. Every pair
of nonnegative coordinates satisfies at least one of the comparisons.
\end{proof}

\subsubsection{Partial sorting nodes}

Select indices $a_0<\cdots<a_{s-1}$ with $2\leq s\leq6$.
For every permutation $\pi=(\pi_0,\ldots,\pi_{s-1})$ of these
indices, create a child with
\begin{equation}\label{eq:sortsub}
 z_{\pi_j}=\sum_{t=0}^{j}y_{a_t}\quad(0\leq j<s),
 \qquad z_\ell=y_\ell\quad(\ell\notin\{a_0,\ldots,a_{s-1}\}).
\end{equation}
The child matrix is determined by
\begin{equation}\label{eq:sortmatrix}
 U_{a_t}=\sum_{j=t}^{s-1}T_{\pi_j}\quad(0\leq t<s),
 \qquad U_\ell=T_\ell\quad(\ell\notin\{a_0,\ldots,a_{s-1}\}).
\end{equation}
All sums in~\eqref{eq:sortmatrix} use the original parent columns.
The artifact lists the $s!$ children in lexicographic permutation order.

\begin{lemma}\label{lem:sort}
The $s!$ partial sorting children cover the parent parameter orthant.
\end{lemma}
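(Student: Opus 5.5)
Given a parent point $z\in\R_{\geq0}^{11}$, I will exhibit one permutation $\pi$ and a nonnegative child parameter $y$ that reproduces $z$ under~\eqref{eq:sortsub}. Then I will check that the child matrix in~\eqref{eq:sortmatrix} is the matching composition, in the same way as the proof of Lemma~\ref{lem:binary}.

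\textbf{Choosing the permutation and the parameters.} Take the $s$ selected coordinates $z_{a_0},\ldots,z_{a_{s-1}}$. Choose a permutation $\pi$ of the indices $a_0<\cdots<a_{s-1}$ that lists them in weakly increasing order of value:
\[
 z_{\pi_0}\leq z_{\pi_1}\leq\cdots\leq z_{\pi_{s-1}}.
\]
Ties may be broken arbitrarily, for instance by index. Such a $\pi$ exists because any finite list of reals can be sorted. Now define
\[
 y_{a_0}=z_{\pi_0},\qquad y_{a_t}=z_{\pi_t}-z_{\pi_{t-1}}\ (1\leq t<s),\qquad y_\ell=z_\ell\ (\ell\notin\{a_0,\ldots,a_{s-1}\}).
\]
Each of these is nonnegative. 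The first is a coordinate of $z\geq0$, the consecutive differences are nonnegative by the sorting, and the unselected coordinates are copied from $z$. The sums $\sum_{t=0}^{j}y_{a_t}$ telescope to $z_{\pi_j}$, so~\eqref{eq:sortsub} returns exactly $z$. This is the same telescoping argument as in Lemma~\ref{lem:gap}.

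\textbf{Checking the child matrix.} I will verify that $Tz=Uy$ with $U$ as in~\eqref{eq:sortmatrix}. Start from
\[
 Tz=\sum_\ell T_\ell z_\ell
 =\sum_{\ell\notin\{a\}}T_\ell y_\ell+\sum_{j=0}^{s-1}T_{\pi_j}\sum_{t=0}^{j}y_{a_t}.
\]
Exchange the order of the double sum to get
\[
 \sum_{t=0}^{s-1}y_{a_t}\sum_{j=t}^{s-1}T_{\pi_j}=\sum_t U_{a_t}y_{a_t}.
\]
Hence $Tz=Uy$ for this child. It follows that every point $Tz$ of the parent image lies in the image of some child, which is the covering claim.

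\textbf{Expected obstacle.} The only delicate point is index bookkeeping. The child variables are indexed by the sorted labels $a_t$, while the parent coordinates are indexed through $\pi_j$. I must confirm that the summation exchange uses the original parent columns, as the statement stipulates, rather than columns updated within the same node. Ties cause no problem, since they only produce zero parameters $y_{a_t}$.
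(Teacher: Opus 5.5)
Your proposal is correct and takes the same approach as the paper's proof: sort the selected coordinates, use the first value and the consecutive differences as nonnegative child parameters, telescope back to $z$, and substitute into $Tz$ to obtain the child matrix. Your explicit exchange of the double sum spells out the paper's one-line claim that substituting into $Tz$ gives \eqref{eq:sortmatrix}. Your treatment of ties, which only produce zero parameters, is consistent with the paper's remark that ties put a point in more than one child.
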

\begin{proof}
Given $z\geq0$, choose a permutation such that
$z_{\pi_0}\leq\cdots\leq z_{\pi_{s-1}}$.
Set $y_{a_0}=z_{\pi_0}$ and
$y_{a_t}=z_{\pi_t}-z_{\pi_{t-1}}$ for $1\leq t<s$,
keeping unselected coordinates unchanged. These values are
nonnegative and satisfy~\eqref{eq:sortsub} by telescoping.
Substituting into $Tz$ gives~\eqref{eq:sortmatrix}. Ties merely
put a point in more than one child.
\end{proof}

\begin{proposition}[Cover soundness]\label{prop:tree}
Let a finite tree start at $(w,I_{11})$. If every internal node
has exactly the children required by a binary or partial sorting
rule, and every leaf is accepted under Definition~\ref{def:leaf},
then $P_w(g)\geq0$ for every $g\geq0$.
\end{proposition}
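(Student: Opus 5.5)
We argue by strong induction on the height of subtrees, proving a more precise claim: for every node $(w,T)$ of the tree, $P_w(Tz)\geq0$ for all $z\in\R_{\geq0}^{11}$. Applied to the root, where $T=I_{11}$ and $Tz=z$, this gives $P_w(g)\geq0$ for every $g\geq0$. Finiteness of the tree makes the induction legitimate, since every node has a well-defined finite height.

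\emph{Base case: leaves.} If the node is a leaf, it is accepted under Definition~\ref{def:leaf}. Lemma~\ref{lem:leaf} then gives $P_w(Tz)\geq0$ on the whole closed orthant $z\geq0$.

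\emph{Inductive step: internal nodes.} Let $(w,T)$ be internal with children $(w,U^{(1)}),\ldots,(w,U^{(m)})$, and assume the claim for each child. Fix $z\geq0$. By Lemma~\ref{lem:binary} for a binary node, or Lemma~\ref{lem:sort} for a partial sorting node, there are a child index $c$ and some $y\geq0$ with $z=Ey$. Here $E$ is the nonnegative substitution matrix of that child, given by the formulas $z_i=y_i+y_j$ or~\eqref{eq:sortsub}.

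The key identity is $Tz=T(Ey)=(TE)y=U^{(c)}y$. This holds because the column formulas for $U$ in the binary rule and in~\eqref{eq:sortmatrix} are exactly the columns of $TE$. For the binary rule, $Ey$ places $y_j$ into both coordinates $i$ and $j$. Hence column $j$ of $TE$ is $T_i+T_j$, and the other columns are unchanged. For sorting, $y_{a_t}$ contributes to $z_{\pi_j}$ for every $j\geq t$, giving $\sum_{j\geq t}T_{\pi_j}$. By the inductive hypothesis, $P_w(Tz)=P_w(U^{(c)}y)\geq0$. Since $z$ was arbitrary, the claim holds at the node.

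The main point needing care is this compatibility between the stated child matrices and the coordinate substitutions. The covering lemmas are phrased in parameter space $z$, while leaves certify the polynomial composed with the child matrix $U$. I would verify the column identity $U=TE$ explicitly for both rules, noting that in~\eqref{eq:sortmatrix} all sums use the original parent columns. Once this identity is established, boundary overlaps between children are harmless: we only need each point to lie in at least one child image, never in exactly one.
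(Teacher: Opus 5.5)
Your proposal is correct and takes the same route as the paper: induction upward from the leaves, using Lemma~\ref{lem:leaf} at the leaves and Lemma~\ref{lem:binary} or Lemma~\ref{lem:sort} to pass nonnegativity from the children to the parent. Your explicit check that each child matrix equals $TE$ spells out a step the paper leaves implicit; it only remarks that substituting into $Tz$ gives~\eqref{eq:sortmatrix}.
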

\begin{proof}
Induct upward from the leaves. At a leaf the assertion follows
from Lemma~\ref{lem:leaf}. At an internal node, the induction
hypotheses prove the polynomial nonnegative on each child image.
Lemma~\ref{lem:binary} or Lemma~\ref{lem:sort} covers every
nonnegative parent parameter, so nonnegativity holds at the parent.
The finite induction ends at the identity matrix.
\end{proof}

\subsection{The complete finite certificate}\label{sec:finite}

\subsubsection{The rank-word partition}

The exact coefficient computations and certificate checks give the
following partition. The derivation programs and a summary of the
full verification records are collected in~\cite{atlas}.

\begin{proposition}[Verified root family]\label{prop:family}
The full set $\W$ in~\eqref{eq:words} has the disjoint partition
\begin{equation}\label{eq:partition}
 \W=\coef\ \dot\cup\ \ord\ \dot\cup\ \bin\ \dot\cup\ \sort
\end{equation}
with the counts in Table~\ref{tab:roots}. For every $w\in\coef$,
the coefficients of $P_w$ are nonnegative. Every $w\in\ord$ has
an accepted ordinary midpoint leaf at $T=I_{11}$.
Every $w\in\bin$ has a complete accepted binary tree.
Every $w\in\sort$ has a complete accepted tree with partial
sorting nodes and possibly binary nodes. All leaves are checked
against~\eqref{eq:pw} under Definition~\ref{def:leaf}.
\end{proposition}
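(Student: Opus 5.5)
This proposition is a finite computational fact, so the plan is to prove it by exhibiting and independently checking an explicit certificate rather than by a conceptual argument. The proof has three layers. First, the partition \eqref{eq:partition} itself. Second, the acceptance of each leaf under Definition~\ref{def:leaf}. Third, the well-formedness of each tree under the two internal-node rules.

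\textbf{Partition.} We would first enumerate $\W$ in a fixed canonical order, for instance the lexicographic order on $(w_1,\ldots,w_{10})$. The four classes $\coef,\ord,\bin,\sort$ are stored as explicit sorted word lists. Disjointness and exact coverage then reduce to two checks on these sorted lists: a linear merge shows that no word appears twice, and the total count equals $10!=3{,}628{,}800$. Together with the class sizes recorded in Table~\ref{tab:roots}, this certifies \eqref{eq:partition} without reference to how the classes were found.

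\textbf{Class $\coef$.} For each $w\in\coef$ we would build the eleven linear forms $A_i,L_i$ in the gaps via \eqref{eq:cumulative}. We then expand \eqref{eq:pw} in exact integer arithmetic over the $352{,}716$ monomials of \eqref{eq:monomials} and check every coefficient for a negative sign. To make this independent of the generating program, the expansion is recomputed by a separate routine. One option is to multiply the products $\prod_{j\neq i}L_j$ incrementally; another is to evaluate $P_{11}$ symbolically and then substitute.

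\textbf{Class $\ord$.} For $w\in\ord$ the stored data is a value of $k$ and a list of ordinary circuits $(a,\alpha,\gamma)$ with $\beta=(\alpha+\gamma)/2$. The checker verifies the following:
\begin{itemize}
\item $\alpha+\gamma$ is even coordinatewise, so $\beta$ is integral;
\item all exponents have degree $11+k$;
\item $a>0$.
\end{itemize}
It then forms $2H^kP_w$ and subtracts $\sum a(z^\alpha+z^\gamma-2z^\beta)$. Checking nonnegativity of the residual $R$ in \eqref{eq:residual} completes the test.

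\textbf{Classes $\bin$ and $\sort$.} For $w\in\bin\cup\sort$ the tree is traversed from $(w,I_{11})$. At each internal node the checker recomputes the child matrices from the parent by the column rules of Lemma~\ref{lem:binary} or \eqref{eq:sortmatrix}. It confirms that the required number of children is present, namely two for a binary node and $s!$ for a sorting node with $2\le s\le6$, listed in lexicographic order. Each leaf is then tested with $P_w(Tz)$, which is obtained by substituting $g=Tz$ into the gap polynomial. Weighted circuits are checked via $d^2\le4uv$ in integers. Once a tree passes, Proposition~\ref{prop:tree} applies, although the proposition as stated only asserts acceptance.

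\textbf{Main obstacle.} The main difficulty is computational scale and trust, not mathematical content. We would bound the cost and argue reproducibility as follows:
\begin{itemize}
\item roughly $3.6$ million expansions of degree-eleven polynomials in eleven variables;
\item degree up to $14$ after multiplication by $H^3$;
\item arbitrary-precision integers, with no floating point anywhere;
\item roots processed in parallel, each emitting a hash-logged verdict.
\end{itemize}
The obstacle would be addressed by keeping the checker minimal, consisting only of expansion, substitution, circuit-condition checks and sign checks. The checker would also be separate from the search code, and the records and programs would be published in~\cite{atlas}.
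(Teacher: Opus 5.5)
Your proposal is correct and follows essentially the same route as the paper. Both rest on the same ingredients: an exhaustive coefficient check over all $10!$ words, independent reconstruction of $P_w$ (or $P_w(Tz)$) from the linear factors in \eqref{eq:pw}, exact integer checks of every leaf residual, recomputation of child matrices, and explicit set comparisons for disjointness and coverage. The differences are only in implementation: you use arbitrary-precision integers throughout, whereas the paper uses 64/128-bit arithmetic justified by the bound \eqref{eq:bound}; and the paper obtains $\coef$ by classifying every word and then matches $\ord\cup\bin\cup\sort$ against the exception set $\mathcal E$, rather than merging four stored lists.
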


\begin{table}[htbp]
\centering
\begin{tabular}{@{}lr@{}}
\toprule
Accepted root class & Number of rank words\\
\midrule
Nonnegative coefficients ($\coef$) & $3{,}358{,}617$\\
Ordinary midpoint leaf ($\ord$) & $267{,}952$\\
Complete binary cover ($\bin$) & $2{,}195$\\
Complete cover with sorting nodes ($\sort$) & $36$\\
\midrule
Total & $3{,}628{,}800$\\
\bottomrule
\end{tabular}
\caption{Counts of original rank-word roots. Tree rows count complete
roots, not their leaves or individual circuits.}\label{tab:roots}
\end{table}

The coefficient classification independently enumerates every word
and defines the exception set $\mathcal E=\W\setminus\coef$ of
size $270{,}183$. The final assembly reconstructs the other three
sets from actual accepted root records, checks their pairwise
intersections, and verifies
\[
 \ord\cup\bin\cup\sort=\mathcal E.
\]
The missing set, outside-exception set, and pairwise intersections
are all empty, establishing~\eqref{eq:partition}.

\subsubsection{Independent polynomial reconstruction}

The checker computes the original cleared polynomial directly from
its linear factors. A useful recurrence starts at $D_0=1$, $B_0=0$
and, for $0\leq t\leq10$, updates
\begin{equation}\label{eq:recurrence}
 D_{t+1}=L_tD_t,\qquad B_{t+1}=L_tB_t+A_tD_t.
\end{equation}
Induction yields
\[
 B_t=\sum_{i=0}^{t-1}A_i
              \prod_{\substack{0\leq j<t\\j\neq i}}L_j,
 \qquad D_t=\prod_{j=0}^{t-1}L_j,
\]
so $B_{11}=P_w$. At a transformed leaf, the checker substitutes
$Tz$ into the linear factors before constructing the polynomial.

The full coefficient enumeration uses exact dense homogeneous
coefficient arrays. Shared permutation prefixes reduce repeated
work, while every root still receives a complete degree-eleven
coefficient check. The recurrence is also compared with a direct sum-of-products
expansion and exact point evaluations.

For each leaf, the integer computation checks the homogeneous
degrees, positive weights, midpoint identity and all residual
coefficients in~\eqref{eq:residual}. Child matrices are generated by
the substitution rules~\eqref{eq:sortmatrix} and Lemma~\ref{lem:binary}.

\subsubsection{Arithmetic safety and recorded workload}

Exact computation requires control of the intermediate arithmetic.
For an untransformed root, the coefficient $\ell^1$ norms satisfy
$\|A_i\|_1\leq10$ and $\|L_i\|_1\leq22$.
Using the triangle inequality in~\eqref{eq:recurrence} gives
\begin{equation}\label{eq:bound}
 \|B_t\|_1\leq10t\,22^{t-1},\qquad
 \|P_w\|_1\leq110\,22^{10}
    =2{,}921{,}591{,}507{,}056{,}640<2^{52}.
\end{equation}
The accumulated absolute contributions obey the same bound, which
also controls intermediate coefficient additions in the defining
construction. This justifies signed 64-bit arithmetic for the
untransformed coefficient classification.

Overlay and transformed-tree verification use signed 128-bit
arithmetic with the recorded bounds and safety checks for matrix
entries, polynomial construction, multiplication, and residual
updates. The circuit inequality $d^2\leq4uv$ is checked with
arbitrary-precision integer arithmetic in Python. 

The two ordinary batches contain $81{,}961{,}459$ circuits and check
$123{,}084{,}490{,}482$ residual coefficient positions in total.
Their multiplier distribution is shown in Table~\ref{tab:powers}.

\begin{table}[htbp]
\centering
\begin{tabular}{@{}cr@{}}
\toprule
Multiplier power $k$ & Ordinary roots\\
\midrule
$0$ & $210{,}284$\\
$1$ & $39{,}872$\\
$2$ & $14{,}405$\\
$3$ & $3{,}391$\\
\midrule
Total & $267{,}952$\\
\bottomrule
\end{tabular}
\caption{Powers in the ordinary identity-matrix certificates. Every
row includes the fixed positive scale factor $2$ in~\eqref{eq:residual}.}
\label{tab:powers}
\end{table}

The sort-containing class was accepted in three batches of $1$, $29$,
and $6$ roots, checking respectively $120$, $3{,}533$, and $922$
leaves.

\subsection{Proof of the main theorem}\label{sec:assembly}

\begin{proof}[Proof of Theorem~\ref{thm:main}]
First let the coordinates of $x$ be pairwise distinct. Cyclic
rotation leaves $C_{11}$ unchanged, so rotate the minimum into
position zero. The resulting tuple has a unique rank word
$w\in\W$ and strictly positive gaps as in~\eqref{eq:gaps}.

By Proposition~\ref{prop:family}, the word belongs to one of the
four accepted classes. For $\coef$, coefficientwise
nonnegativity implies $P_w(g)\geq0$. For $\ord$, use
Lemma~\ref{lem:leaf} with $T=I_{11}$. For $\bin$ or $\sort$,
use Proposition~\ref{prop:tree}. Thus every normalized word
satisfies $P_w(g)\geq0$. The denominator in~\eqref{eq:quotient}
is positive, proving~\eqref{eq:main} for every distinct positive tuple.

Now let $x\in\R_{>0}^{11}$ have arbitrary ties. Distinct positive
tuples are dense in this open orthant: every sufficiently small
neighborhood contains a point outside the finitely many hyperplanes
$x_i=x_j$. Choose a sequence of such tuples converging to $x$.
Each denominator remains positive on the domain, so $C_{11}$ is
continuous there. Taking the limit of the already proved
nonnegative values gives $C_{11}(x)\geq0$.

If $x_0=\cdots=x_{10}=t>0$, every numerator is zero, proving the
stated equality case.
\end{proof}

\section{Discussion}

The reduction combines three elementary constructions: cumulative
sums encode ordered values, midpoint AM--GM circuits supply
nonnegative polynomial summands, and comparison or sorting
substitutions cover the gap orthant. Exact expansion then reduces
the certificate conditions to integer coefficient inequalities.
The disjoint partition in Proposition~\ref{prop:family} connects
these local decompositions to all normalized rank words.

Binary comparison uses two children to resolve a single ordering
relation. Partial sorting resolves all ordering relations in a
selected group at once, with $s!$ children. Combining these rules
allows the certificate to concentrate the larger sorting steps on
36 roots, while direct coefficients and ordinary midpoint leaves
settle more than $99.9\%$ of the root family.

\paragraph{Acknowledgements.}
GPT-6 Astra assisted with calculations, proof development and
editorial work. The resulting identities and certificates were
checked by exact integer computation.

\paragraph{Derivation code.}
The polynomial expansion, certificate generators, independent integer
checkers and worked examples are available in the mathematical
code directory of \emph{Vasc Proof Atlas}~\cite{atlas}:
\url{https://github.com/dgjszy/vasc-proof-atlas/tree/main/derivation}.

\end{document}